\def\cspace{\mathbb{C}^n}
\def\posint{\mathbb{Z}^+}
\def\sysspace{\mathcal{Z}}
\def\posint{\mathbb{N}_{\geq 0}}
\newcommand{\init}[1]{#1_0}
\newcommand{\unsafe}[1]{#1_u}
\def\timesys{(\sysspace, \init{\sysspace}, F, f)}
\def\iu{i}
\def\ie{\emph{i.e.}, }
\def\eg{\emph{e.g.}, }
\theoremstyle{acmdefinition}
\newtheorem{remark}[theorem]{Remark}}
\begin{document}
\title{Verification of Quantum Circuits through Discrete-Time Barrier Certificates}
\author{Marco Lewis}
\email{m.j.lewis2@newcastle.ac.uk}
\affiliation{
    \department{School of Computing}
    \institution{Newcastle University}
    \streetaddress{1 Science Square}
    \city{Newcastle upon Tyne}
    \country{UK}
    \postcode{NE4 5TG}
}
\author{Sadegh Soudjani}
\affiliation{
    \institution{Max Planck Institute for Software Systems}
    \streetaddress{Paul Ehrlich Str. 26}
    \city{Kaiserslautern}
    \country{Germany}
    \postcode{67663}
}
\author{Paolo Zuliani}
\affiliation{
    \institution{Università di Roma ``La Sapienza''}
    \department{Dipartimento di Informatica}
    \streetaddress{via Salaria 133}
    \city{Rome}
    \country{Italy}
    \postcode{00198}
}


\keywords{quantum circuits, barrier certificates, k-induction, dynamical systems}

\begin{abstract}
    Current methods for verifying quantum computers are predominately based on interactive or automatic theorem provers.
    Considering that quantum computers are dynamical in nature, this paper employs and extends the concepts from the verification of dynamical systems to verify properties of quantum circuits.
    Our main contribution is to propose k-inductive barrier certificates over complex variables and show how to compute them using Hermitian Sum of Squares  optimization.
    We apply this new technique to verify properties of different quantum circuits.
\end{abstract}

\maketitle

\section{Introduction}
\label{sec:intro}
Various techniques for verifying quantum programs and circuits have been explored in the past two decades.
Many of the tools currently available involve using quite manual verification techniques such as theorem provers~\cite{SQIR,QHLProver,QHLProver-AFP}.
Other approaches have investigated using automated techniques such as Satisfiability Modulo Theory (SMT) solvers~\cite{QBricks,symQV}, automata-based methods~\cite{AutoQ}, and abstract interpretation~\cite{QuantumAbstractInterp}.
Whilst theorem provers have the benefit of scalability, automated approaches are less manual, only requiring the circuit or program to be proved and the specification of the desired behavior.

In recent work \cite{Lewis23}, we proposed a new automated approach for analyzing the behavior of quantum systems through the usage of barrier certificates~\cite{Dai17}.
Barrier certificates are a technique used to solve safety problems, where we have a dynamical system (\ie we know its dynamics $\dv{x}{t}$) that we want to avoid entering an unsafe region.
The main issue faced in \cite{Lewis23} was how to adapt barrier certificates to the complex domain for continuous-time quantum systems.

There is an argument that one could simply transform a complex system into a real system and analyze the system that way.
This leads to two major issues.
Firstly, the number of variables for the system doubles by taking this into consideration potentially making it take longer to solve.
Secondly, the system or the meaning behind the barrier certificate generated becomes harder to interpret.
Additionally, many barrier certificates generation techniques are based on Sum of Squares (SOS) optimization, which makes use of semidefinite program solvers in order to find a solution.
Gilbert and Josz demonstrate in \cite{Gilbert17} that semidefinite programs can benefit from a $\times 2$ to a $\times 4$ speed-up when using complex variables over converting the variables from complex to real.
Therefore, it is worth exploring the adaptation of dynamical systems to complex variables.

Adapting problems from the real domain into the complex domain has been investigated for few safety problems.
Fang and Sun~\cite{Fang2013} have developed a technique for analyzing the stability of complex dynamical systems.
In \cite{Lewis23}, we adapted the technique developed by Fang and Sun to develop barrier certificates for complex variables.
We extend upon our previous work in the following directions:
\begin{enumerate}
    \item we investigate the safety of {\em discrete-time} dynamical systems using complex variables, adapting barrier certificate techniques from \cite{Prajna04} and \cite{Anand21};
    \item we show how Sum of Squares (SOS) optimization is generalized to the complex domain using Hermitian Sum of Squares (HSOS), giving a more general approach to barrier certificate generation than in our previous work; and
    \item we investigate the usage of our adapted techniques for verifying quantum circuits and apply them to multiple case studies.
\end{enumerate}
By showing how we can adapt SOS optimization for complex numbers, we provide a method that allows one to easily adapt any discrete-time dynamical system and barrier certificate definition from the real domain into the complex domain.

\paragraph{Related Work}
In the last few years, there have been substantial efforts made to verify properties of quantum computers and programs.
Various tools have been developed using a variety of different formal methods.
Such methods include theorem provers~\cite{CoqQ,SQIR,QHLProver}; abstract interpretation~\cite{QuantumAbstractInterp}; static analysis~\cite{LintQ}; model checking techniques~\cite{QReach, Guan24}; relational reasoning~\cite{Barthe19, Yan24}; techniques using SMT solvers~\cite{Giallar,QBricks}; and automata-based techniques~\cite{AutoQ}.
For more on formal verification of quantum computers, we refer to surveys on the topic~\cite{Chareton23, Lewis23survey}.

Another approach is to treat a quantum program or computer as a dynamical system and transform problems into reachability or safety verification problems.
There are several ways to solve the safety verification problem for dynamical systems.
One approach is to use abstract interpretation to give an abstraction of the system's evolution~\cite{Cousot77, Cousot2001}.
This approach has already been investigated for verification of quantum programs in \cite{QuantumAbstractInterp}.
Another approach is to use forward or backward reachability~\cite{SS2014precise, SS2015quantitative, Mitchell07}, evolving the system from the initial or unsafe region until the system is shown to be safe or unsafe.
Barrier certificates are another method for solving the problem of safety~\cite{Dai17}.
The benefit of barrier certificates in comparison to the other two approaches is that barrier certificates do not need to evolve the system (\ie solve its dynamics) to prove safety.
In addition, barrier certificates can prove safety over unbounded time horizons, which is often not possible with other techniques.

There are other barrier certificates that exist depending on the system used and the nature of the safety problem.
Whilst in this article we are investigating discrete-time dynamical systems~\cite{Agrawal17}, there are techniques for handling continuous time and hybrid systems~\cite{Prajna04}.
Recent research directions have explored barrier certificates for stochastic systems~\cite{TAKANO18}, where the system evolves with noise. The use of barrier certificates for specifications beyond safety is also studied (e.g., \cite{jagtap2020formal,kordabad2024control}), and data-driven techniques for computing barrier certificates are also developed \cite{schon2024data,salamati2024data}.

\section{Background}
\label{sec:background}
We begin by introducing notation and covering some common theory in quantum computing and safety using barrier certificates.
A deeper background to quantum computing can be found in Nielsen and Chuang's volume~\cite{nielsen_chuang_2010}.

\subsection{Notation}
Throughout we write:
\begin{itemize}
    \item the imaginary unit, $\iu = \sqrt{-1}$ (we do \emph{not} use $\iu$ as an iterator or variable);
    \item for a complex number $z = a + b\iu$, its complex conjugate is $\overline{z} = a - b\iu$;
    \item for an $n \times m$ matrix, $z$, we write
    \begin{itemize}
        \item $(z)_{jk}$ as an element of $z$ where $0 \leq j \leq n-1, 0 \leq k \leq m-1$ (for vectors, $m=1$, we may simply write $(z)_j$),
        \item $\overline{z}$ as the conjugate of $z$ ($(\overline{z})_{jk} = \overline{(z)_{jk}}$),
        \item $z^\intercal$ as the transpose of $z$ ($(z^\intercal)_{jk} = (z)_{kj}$),
        \item and $z^\dagger = \overline{z}^\intercal$ as the conjugate transpose of $z$;
    \end{itemize}
    \item $I_n$ for the $n \times n$ identity operation.
\end{itemize}

Additionally, we make use of sub-equations, \eg Equation \eqref{eq:gen:init}, \eqref{eq:gen:unsafe} and \eqref{eq:gen:barrier}.
We may refer to these equations collectively as \eqref{eq:gen}.

\subsection{Quantum Circuits}
\label{sec:background:quantum}
For an $n$-qubit system, a quantum state can be described by a complex vector $z \in \mathbb{C}^{2^n}$ such that $\sum_j |(z)_j|^2 = 1$.
A quantum state is commonly written as $\ket{\phi} = \sum_{j=0}^{2^n - 1} (z)_j \ket{j}$, where $\{ \ket{j}: j=0,\ldots , 2^n-1\}$ is the {\em diagonal} (orthonormal) basis.
Evolution of a system occurs through unitary operations; which are $2^n \times 2^n$ complex matrices, $U$, such that $U U^\dagger = U^\dagger U = I_{2^n}$.
We write the evolution of $\ket{\phi}$ according to $U$ as $U\ket{\phi}$.

Some common unitary operations include, for example, the Hadamard $(H)$, $Z$, NOT $(X)$, and  Controlled-NOT ($CNOT$) gates:
\begin{align}
    H = \frac{1}{\sqrt{2}}
    \begin{pmatrix}
        1 & 1 \\
        1 & -1
    \end{pmatrix},
    \quad
    Z = 
    \begin{pmatrix}
        1 & 0 \\
        0 & -1
    \end{pmatrix},
    \quad
    X =
    \begin{pmatrix}
        0 & 1 \\
        1 & 0
    \end{pmatrix},
    \quad
    CNOT =
    \begin{pmatrix}
        1 & 0 & 0 & 0 \\
        0 & 1 & 0 & 0 \\
        0 & 0 & 0 & 1 \\
        0 & 0 & 1 & 0
    \end{pmatrix}
    \label{eq:qops}
\end{align}
which behave as 
\begin{align*}
    &H \ket{j} = \frac{1}{\sqrt{2}}(\ket{0}+(-1)^j\ket{1}) \quad\quad
    &Z \ket{j} = (-1)^j \ket{j} \\
    &X \ket{j} = \ket{\neg j} \quad\quad 
    &CNOT \ket{j,k} = \ket{j, j\oplus k} 
\end{align*}
where $j,k\in\{0,1\}$ and $\oplus$ is the XOR Boolean operation.

\begin{remark}
    We use $U$ to represent the quantum operation or the corresponding matrix.
    \textit{I.e.}, we may write the evolution of a quantum state as $U \ket{\phi}$ or $U z$ (using the standard dot product).
\end{remark}

Measurement of a quantum state is another way to evolve a quantum system.
We are only interested in quantum circuits without measurement.
However, for our purposes, we only need to be concerned that a basis state, $\ket{j}$, of a quantum state, $\ket{\phi}$, is measured with a probability based on its amplitude: $P(\ket{j} \text{ measured from } \ket{\phi}) = \abs{(z)_j}^2$.
See Nielsen and Chuang's volume~\cite{nielsen_chuang_2010} for details on measurement.

A quantum circuit diagram consists of wires that each correspond to a qubit, and gates that perform unitary operations on qubits.
The quantum operations given in Equation~\eqref{eq:qops} are represented by the respective gates in Figure~\ref{fig:qopgates}.
A diagrammatic example can be found in Figure~\ref{fig:qcirc:ex}, where if $\ket{\phi}$ is the initial quantum state then the resulting quantum state is $U_{k-1} U_{k-2} \dots U_0 \ket{\phi}$.

\begin{figure}[t]
    \centering
    \begin{subfigure}[b]{0.24\textwidth}
        \centering
        \includegraphics[width=.5\textwidth]{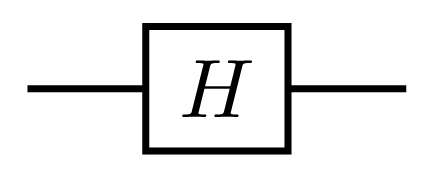}
    \end{subfigure}
    \begin{subfigure}[b]{0.24\textwidth}
        \centering
        \includegraphics[width=.5\textwidth]{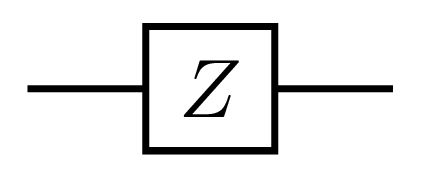}
    \end{subfigure}
    \begin{subfigure}[b]{0.24\textwidth}
        \centering
        \includegraphics[width=.5\textwidth]{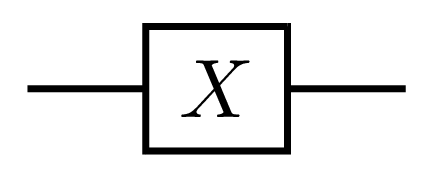}
    \end{subfigure}
    \begin{subfigure}[b]{0.24\textwidth}
        \centering
        \includegraphics[width=.5\textwidth]{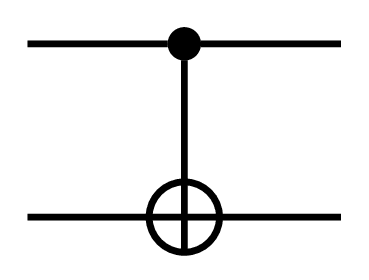}
    \end{subfigure}
    \caption{Gate representations of the $H, Z, X$ and $CNOT$ operations respectively.}
    \label{fig:qopgates}
    \Description{Graphical quantum gates for Hadamard ($H$), phase ($Z$), NOT ($X$) and controlled-NOT ($CNOT$) operations.}
\end{figure}

\begin{figure}[t]
    \centering
    \includegraphics[width=.45\textwidth]{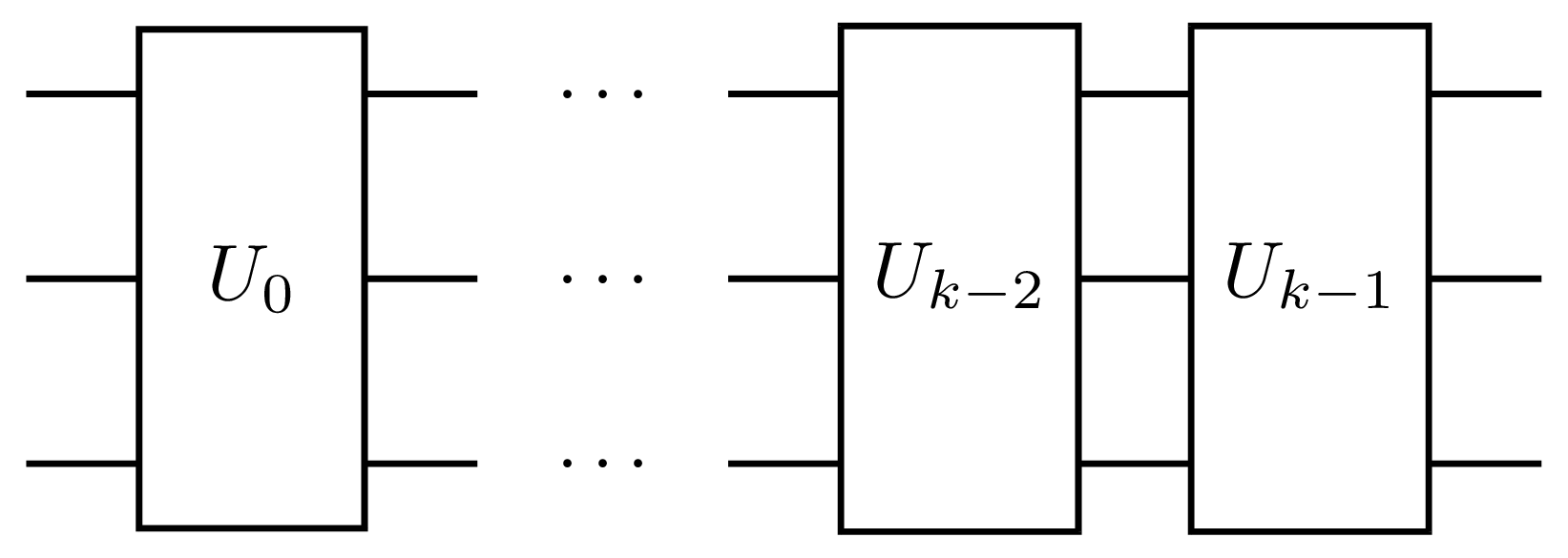}
    \caption{Example of a quantum circuit.}
    \label{fig:qcirc:ex}
    \Description{A three qubit system with operations $U_0, \dots, U_{k-2}, U_{k-1}$.}
\end{figure}

\subsection{Barrier Certificates for Real Systems}
In this section, we cover the fundamentals of barrier certificates for discrete-time real  systems.
A dynamical system works in a state space $X \subseteq \mathbb{R}^n$, for some natural number $n$, and evolves according to
\begin{equation*}
    x_{t+1} = f(x_t),\quad f:\mathbb{R}^n \to \mathbb{R}^n,
\end{equation*}
where $x_t \in X$ denotes the state of a system at time $t \in \posint{}$.

Whilst there are several problems to consider with dynamical systems, such as reachability~\cite{Mitchell07}, we are interested in the problem of safety.
\begin{definition}[Safety]
\label{def:safety}
A system, $x_{t+1} = f(x_t)$, evolving over $X \subseteq \mathbb{R}^n$ is considered safe if it cannot reach the unsafe set, $X_u \subseteq X$, from the initial set, $X_0 \subseteq X$.
That is for all $t \in \posint{}$ and $x(0) \in \init{X}$, then $x(t) \notin \unsafe{X}$.
\end{definition}

A barrier certificate is a function $B : X \to \mathbb{R}$ that satisfy the conditions
\begin{subequations}
\label{eq:gen}
\begin{align}
    B(x) \leq \kappa &, \forall x \in \init{X};
    \label{eq:gen:init}\\
    B(x) > \kappa &, \forall x \in \unsafe{X};
    \label{eq:gen:unsafe} \\
    x_0 \in \init{X} \implies B(x_t) \leq \kappa &, \forall t \in \posint{};
    \label{eq:gen:barrier}
\end{align}
\end{subequations}
for some $\kappa \in \mathbb{R}$.
These conditions do two things to the system:
(i) they separate the state space into an over-approximate reachable region and an unsafe region (through conditions~\eqref{eq:gen:init} and \eqref{eq:gen:unsafe}),
(ii) they ensure that the system cannot enter the unsafe region (through condition~\eqref{eq:gen:barrier}).
By finding a function $B$ that meets these conditions, we can therefore have safety of the system.
\begin{remark}
    We have used the term ``barrier certificate'' to refer to the functions satisfying certain inequalities that guarantee safety of the system. Previous works, e.g., \cite{ames2019control,nejati2022compositional,kordabad2024control}, have also used the term ``barrier function'' for functions that obey a similar set of conditions as to those in \eqref{eq:gen}.
\end{remark}

Conditions~\eqref{eq:gen:init} and \eqref{eq:gen:unsafe} are easily achievable by many functions, but it is usually condition~\eqref{eq:gen:barrier} that restricts the functions that we can use.
It is standard that barrier certificates imply the last condition through some other condition.
For example, consider the following definition:
\begin{definition}[Discrete-Time Barrier Certificate~\cite{Anand21}]
\label{def:discretebarrier}
For a discrete-time system $x_{t+1} = f(x_t)$, $X \subseteq \mathbb{R}^n$, $X_0 \subseteq X$ and $X_u \subseteq X$, a function $B: \mathbb{R}^n \to \mathbb{R}$ that obeys the following conditions:
\begin{subequations}
\label{eq:discretebarrier}
\begin{align}
    B(x) \leq 0 &, \forall x \in \init{X}; \\
    B(x) > 0 &, \forall x \in \unsafe{X}; \\
    B(f(x)) - B(x) \leq 0 &, \forall x \in X;
\end{align}
\end{subequations}
is a discrete-time barrier certificate.
\end{definition}

It should be trivial to see how the constraints given in \eqref{eq:discretebarrier} imply the conditions in~\eqref{eq:gen} with $\kappa = 0$.

\begin{remark}
    Barrier certificates can additionally be used as a technique for verifying reachability properties of a system.
    The techniques that we use in this article for generalizing discrete dynamical systems to complex systems can be applied to continuous or stochastic dynamical systems, other barrier certificates for safety, or even barrier certificates for reachability properties.
\end{remark}

\section{Quantum Circuits as Dynamical Systems}
\label{sec:qcirc2dynam}
We begin by introducing the dynamical system we consider to represent our quantum circuits.

\begin{definition}[Discrete-time complex-space system]
    A {discrete-time} {complex-space} system is a tuple $S = \timesys$, where
    \begin{itemize}
        \item $\sysspace \subseteq \cspace$ is the continuous (complex-valued) state space;
        \item $\init{\sysspace} \subseteq \sysspace$ is the set of initial states;
        \item $F$ is a {\em finite} set of functions that contain all the possible dynamics the system can perform;
        \item and $f : \mathbb{Z}_{\geq 0} \to F$  assigns at each time step the dynamics of the system (\ie $f$ is total).
    \end{itemize}
    At each time step, $t$, the state of the system is $z_t$ and the dynamics of the system is defined by
    \begin{equation*}
    z_{t+1} = f(t)(z_t) = f_t(z_t).    
    \end{equation*}
\end{definition}

Conversion of quantum circuits into the dynamical system given above is straightforward.
For an $n$-qubit system, quantum states are restricted to points on the unit circle of $\mathbb{C}^{2^n}$, \ie{} $\sysspace = \{z \in \mathbb{C}^{2^n} : \sum_j \abs{(z)_j}^2 = 1\}$.
We assume that for any quantum circuit, there is some initial state $\ket{\phi} = \sum_{j=0}^{2^n} (z)_j \ket{j}$ or a set of initial points that can be chosen from.
One can include error in the initial state through noise that occurs in preparing the initial state.
Thus, $\init{\sysspace}$ can be specified by the user depending on how noisy preparing the initial state is.
Finally, say the quantum circuit is of the form $U = U_{k-1} \dots U_1 U_0$, where $U_0, U_1, \ldots$ are unitary operations.
Then we simply have that $F = \{U_t\}_{t=0,\dots,k-1}$, and $f(t) = U_t$ for $t < k$ and $f(t) = I_{2^n}$ otherwise.

\begin{figure}[t]
    \centering
    \includegraphics[width=0.6\textwidth]{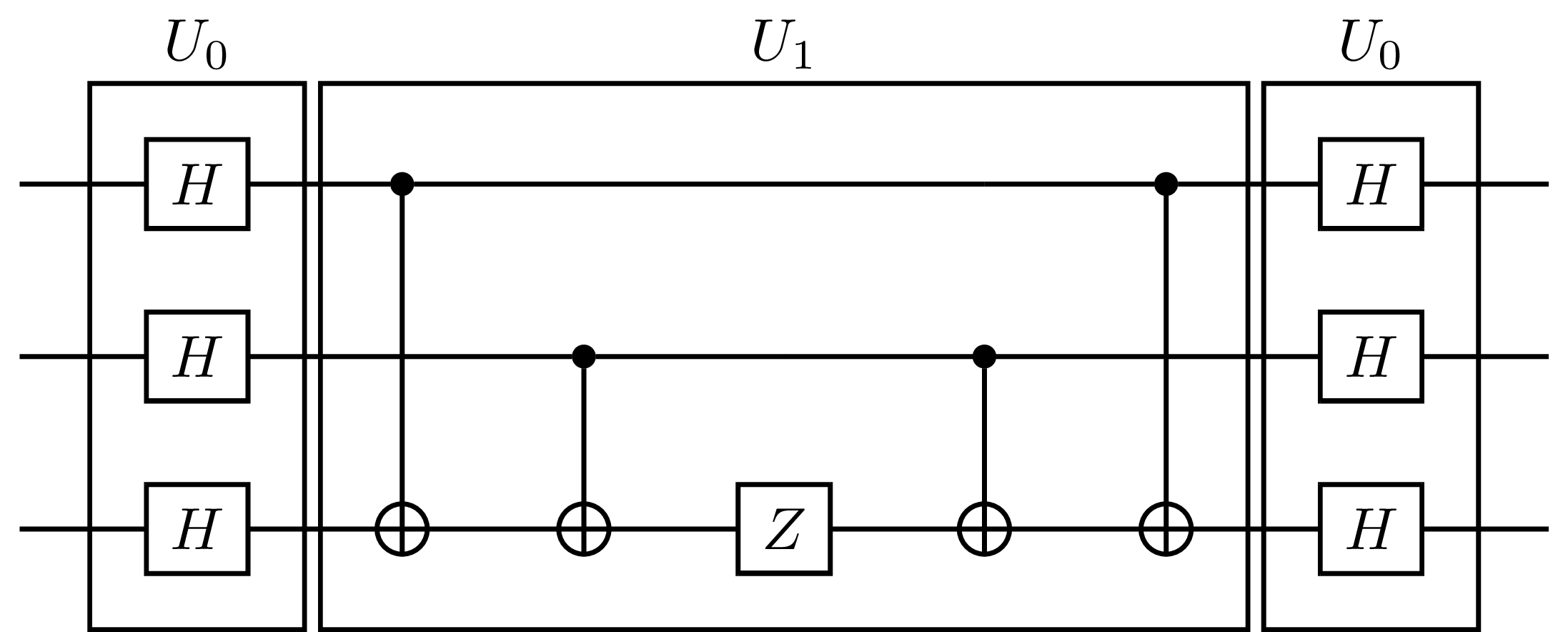}
    \caption{Quantum circuit with 3 unitary operations consisting of grouped gates.}
    \label{fig:unitary2dynamical}
    \Description{A 3-qubit system consisting of operation group consisting of $H^{\otimes 3}$, a collection of $CNOT$ operation and a $Z$ operation, and $H^{\otimes 3}$.}
\end{figure}

There is flexibility in what level the unitary operations are chosen, since one can let unitary operations be individual gates or they can be a combination of several gates at different depths.
For example, the quantum circuit in Figure~\ref{fig:unitary2dynamical} is equivalent to the system
$S = \timesys$ where
\begin{align*}
    & \sysspace = \{z \in \mathbb{C}^{8} : \sum_j \abs{(z)_j}^2 = 1\}; \\
    & \init{\sysspace} = \{z \in Z : \abs{(z)_0}^2 \geq 0.99, \Im{(z)_0} = 0\}; \\
    & F = \{U_0, U_1\}; \text{ and} \\
    & f_t(z) = \begin{cases}
    U_0 z & \text{for } t = 0, 2, \\
    U_1 z & \text{for } t = 1, \\
    z & \text{otherwise;}
    \end{cases}
\end{align*}
and the actual unitary transformation corresponding to $U_0$ and $U_1$ can be easily deduced from Figure \ref{fig:unitary2dynamical}.
In this example, $\init{\sysspace}$ corresponds to the set of initial quantum states having $\ket{0}$ be measured with at least 99\% probability.

\section{Complex Barrier Certificates}
\label{sec:complexbarriers}
We introduce the notion of safety for our complex system.
\begin{definition}[Safety]
Let $S=\timesys$ be a discrete-time complex-space system and $\unsafe{\sysspace} \subseteq \sysspace$ denote the unsafe set.
Then $S$ is safe if for all $\init{z} \in \init{\sysspace}$, we have that $z_t \notin \unsafe{\sysspace}$ for any $t \in \mathbb{Z}_{\geq 0}$.
\end{definition}

To solve the problem of safety using barrier certificates we combine the ideas behind barrier certificates for hybrid systems \cite{Prajna04} and $k$-inductive barrier certificates \cite{Anand21}, adapting them to our dynamical system as well as the complex domain.
Additionally, we adapt the certificate to handle different dynamics that occur during evolution.

The barrier certificates for~\cite{Prajna04} are capable of handling systems with dynamics that occur on both a continuous and discrete space.
For our system $S = \timesys$, these spaces corresponds to our complex space, $\sysspace$, and the time steps, $\mathbb{N}_{\geq 0}$, respectively.
The $k$-inductive barrier certificates introduced in~\cite{Anand21} work on the basis that the system is allowed to evolve slightly towards the unsafe region but after $k$ steps the system will be further away from the unsafe region than before taking the $k$ steps.
This $k$-inductive property is a less strict property than, for example, the discrete-time barrier certificate given in Definition~\ref{def:discretebarrier}.
By having a less restrictive property, the set of functions satisfying this property is larger and so we adopt this property for our barrier certificates.

\begin{theorem}[$k$-Inductive Hybrid Barrier Certificate]
    \label{thm:timesafe}
    Let $S = \timesys$ and the unsafe set be $\unsafe{\sysspace}$. 
    Suppose there exists a $k$-inductive hybrid barrier certificate: a collection of functions $\{B_t(z)\}_{t\in\mathbb{Z}_{\geq 0}}$, where $B_t(z) \in \mathbb{R}$ for all $t \geq 0$, $z \in \sysspace$; and constants $k \geq \mathbb{Z}_{\geq 1}$, $\epsilon, \gamma \in \mathbb{R}_{\geq 0}$ and $d > k (\epsilon + \gamma)$ that satisfy the following equations
    \begin{subequations}
    \label{eq:barrier}
    \begin{align}
        & B_0(z) \leq 0, \forall z \in \init{\sysspace}; \label{eq:barrier:initial} \\
        & B_t(z) \geq d, \forall z \in \unsafe{\sysspace}, \forall t \in \mathbb{Z}_{\geq 0}; \label{eq:barrier:unsafe} \\
        & B_t(f_t(z)) - B_t(z) \leq \epsilon, \forall z \in \sysspace, t \in \mathbb{Z}_{\geq 0}; \label{eq:barrier:diff} \\
        & B_{t+1}(z) - B_{t}(z) \leq \gamma, \forall z \in \sysspace, t \in \mathbb{Z}_{\geq 0}; \label{eq:barrier:mode} \\
        & B_{t+k}(f_{t+k-1} ( \hspace{0cm} \dots \hspace{0cm} f_{t+1}( f_t (z)))) - B_t(z) \leq 0, \forall z \in \sysspace, t \in \mathbb{Z}_{\geq 0} \label{eq:barrier:ind} \\
        & \qquad{} \text{ such that } t = rk \text{ for } r \in \mathbb{Z}_{\geq 0}. \nonumber{}
    \end{align}
    \end{subequations}
    Then the safety of $S$ with respect to $\unsafe{\sysspace}$ is guaranteed.
\end{theorem}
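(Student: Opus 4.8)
The plan is to track the value of the barrier along an arbitrary trajectory and show it never reaches the level $d$. Fix $z_0 \in \init{\sysspace}$ and let $z_{t+1} = f_t(z_t)$ be the induced run; since each $f_t$ maps $\sysspace$ into $\sysspace$, we have $z_t \in \sysspace$ for all $t$, so conditions \eqref{eq:barrier:diff}, \eqref{eq:barrier:mode} and \eqref{eq:barrier:ind} may be invoked at every $z_t$. The goal is to prove $B_t(z_t) < d$ for all $t \in \mathbb{Z}_{\geq 0}$; by the contrapositive of \eqref{eq:barrier:unsafe} (which forces $B_t(z) \geq d$ on $\unsafe{\sysspace}$) this yields $z_t \notin \unsafe{\sysspace}$, i.e. safety of $S$.

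First I would control the time steps that are multiples of $k$. Write $\zeta_r := z_{rk}$. Applying \eqref{eq:barrier:ind} at $t = rk$ with $z = \zeta_r$, the inner composition $f_{(r+1)k-1}(\cdots f_{rk}(\zeta_r)\cdots)$ is exactly $\zeta_{r+1}$, so the condition reads $B_{(r+1)k}(\zeta_{r+1}) \leq B_{rk}(\zeta_r)$. A one-line induction on $r$, started from \eqref{eq:barrier:initial} (which gives $B_0(\zeta_0) = B_0(z_0) \leq 0$), then yields $B_{rk}(z_{rk}) \leq 0$ for every $r \in \mathbb{Z}_{\geq 0}$.

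Next I would interpolate over the at most $k-1$ intermediate steps. Fix $t$ and write $t = rk + s$ with $0 \leq s \leq k-1$; the case $s = 0$ is already settled, so assume $s \geq 1$. For $0 \leq j \leq s-1$, using $z_{rk+j+1} = f_{rk+j}(z_{rk+j})$, I split the one-step increment as
\begin{align*}
B_{rk+j+1}(z_{rk+j+1}) - B_{rk+j}(z_{rk+j})
&= \bigl[\,B_{rk+j+1}(z_{rk+j+1}) - B_{rk+j}(z_{rk+j+1})\,\bigr] \\
&\quad + \bigl[\,B_{rk+j}(f_{rk+j}(z_{rk+j})) - B_{rk+j}(z_{rk+j})\,\bigr].
\end{align*}
The first bracket is $\leq \gamma$ by \eqref{eq:barrier:mode} (time index $rk+j$, point $z_{rk+j+1}$) and the second is $\leq \epsilon$ by \eqref{eq:barrier:diff} (time index $rk+j$, point $z_{rk+j}$). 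Telescoping over $j = 0, \dots, s-1$ and using $B_{rk}(z_{rk}) \leq 0$ from the previous step gives
\[
B_t(z_t) = B_{rk+s}(z_{rk+s}) \leq B_{rk}(z_{rk}) + s(\epsilon + \gamma) \leq (k-1)(\epsilon + \gamma) < k(\epsilon + \gamma) < d,
\]
where the final inequality is the hypothesis $d > k(\epsilon + \gamma)$. Hence $B_t(z_t) < d$ for all $t$, which completes the argument.

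There is no deep obstacle here: the proof is an induction at the $k$-step scale plus a telescoping estimate at the intra-block scale. The one place that needs care is the index bookkeeping in the telescoping step — correctly decomposing each increment $B_{t+1}(z_{t+1}) - B_t(z_t)$ into a \emph{dynamics} part bounded by \eqref{eq:barrier:diff} and a \emph{mode/time-index} part bounded by \eqref{eq:barrier:mode}, and making sure those conditions and \eqref{eq:barrier:ind} are only ever applied at points of $\sysspace$. A secondary point worth stating explicitly is that the accumulated slack $s(\epsilon+\gamma)$ over the $s \leq k-1$ intermediate steps never exceeds the safety margin, which is precisely what the constraint $d > k(\epsilon+\gamma)$ guarantees (indeed $(k-1)(\epsilon+\gamma)$ would already suffice, the stated bound being a convenient simplification).
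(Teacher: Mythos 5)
Your proof is correct and follows essentially the same argument as the paper's: induction on multiples of $k$ via condition~\eqref{eq:barrier:ind} to get $B_{rk}(z_{rk}) \leq 0$, a telescoping estimate using~\eqref{eq:barrier:diff} and~\eqref{eq:barrier:mode} to bound the intra-block drift by $s(\epsilon+\gamma)$, and the hypothesis $d > k(\epsilon+\gamma)$ to close. The only cosmetic difference is that you argue directly while the paper phrases it as a contradiction; your closing remark that $(k-1)(\epsilon+\gamma)$ already suffices is a sound (if minor) sharpening of the threshold hypothesis.
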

\begin{proof}
    We reason by contradiction.
    Let the system have a $k$-inductive hybrid barrier certificate, \ie we have $\{B_t(z)\}, k, \epsilon, \gamma, d$ that satisfy the conditions given in \eqref{eq:barrier}, and assume that the system is not safe.
    
    Let $(z_0, \dots, z_{T})$ be a trace, where $z_{t+1} = f_t(z_t)$, that reaches an unsafe state, $z_{T} \in \unsafe{Z}$.
    Let $T= t + m$, where $t,m \in \mathbb{Z}_{\geq 0}$, $t = rk$ for some $r\in \mathbb{Z}_{\geq 0}$ and $m < k$.
    We have that $B_0(z_0) \leq 0$ and $B_{t+m}(z_{t+m}) \geq d$.
    Using \eqref{eq:barrier:diff} and \eqref{eq:barrier:mode}, we have that
    \begin{align*}
        & B_{t + m}(z_{t+m}) = B_{t+m}(f_{t+m-1}(z_{t+m-1})) \\
        & \leq B_{t + m - 1}(f_{t+m-1}(z_{t+m-1})) + \gamma \\
        & \leq B_{t + m - 1}(z_{t+m-1}) + \epsilon + \gamma \\
        & \leq \dots \\
        & \leq B_t(z_t) + m (\epsilon + \gamma) \leq B_t(z_t) + k(\epsilon + \gamma).
    \end{align*}
    Using \eqref{eq:barrier:ind} and induction, we have that $B_t(z_t) \leq B_{t-k}(z_{t-k}) \leq \dots \leq B_0(z_0)$.
    Therefore, we have
    \begin{equation*}
    B_{t+m}(z_{t+m}) \leq B_0(z_0) + k(\epsilon + \gamma) < d.
    \end{equation*}
    This is a contradiction to \eqref{eq:barrier:unsafe} and therefore $z_{T} \notin \unsafe{Z}$.
    Therefore, the system $S$ is safe.
\end{proof}
\begin{remark}
\label{rmk:condimply}
    While the barrier certificate in Definition~\ref{def:discretebarrier} is time-invariant, the barrier certificate introduced in Theorem \ref{thm:timesafe} is time-varying.
    Also, note that with certain values of $k, \epsilon$ and $\gamma$, the number of equations to satisfy can be reduced.
    For example, if $k = 1$ and $\gamma = 0$, then Equation~\eqref{eq:barrier:diff} is implied through Equations~\eqref{eq:barrier:mode} and \eqref{eq:barrier:ind} for any value of $\epsilon$.
    Whilst time-invariant barriers are easier to find, they might not work well for time-varying systems.
    We make use of $k$-inductive hybrid barrier certificates since initial tests with time-invariant barriers could not be generated for some systems.
\end{remark}

Informally, Equations~\eqref{eq:barrier:initial} and \eqref{eq:barrier:unsafe} cover the separation of the initial and unsafe regions, providing a buffer between them.
Equations~\eqref{eq:barrier:diff} and \eqref{eq:barrier:mode} ensures that as the barrier evolves over time the system does not grow too much, both when applying dynamical operations and when swapping between dynamical operations.
Finally, Equation~\eqref{eq:barrier:ind} is the inductive condition that ensures that after $k$ steps, the system remains in a negative space.
These conditions combined together mean that as the system evolves, the barrier may enter the buffer space between the initial and unsafe region for up to $k$ steps but will trend to grow negatively and therefore never enter the unsafe region.

One of the challenges with Theorem~\ref{thm:timesafe} is that the functions, $B_t$, have complex variables but are required to return a real value.
Therefore, we need to restrict the functions to specific classes that can be easily defined and will be useful for finding a barrier.
To do this, we adapt a definition used in
\cite{Lewis23}.
\begin{definition}[Conjugate-flattening Function~\cite{Lewis23}]
    A function $P(z): \mathbb{C}^n \to \mathbb{R}$ is a conjugate-flattening function if $P(z) = p(z, \overline{z})$ for some complex function $p: \mathbb{C}^n \cross \mathbb{C}^n \to \mathbb{C}$ such that $p(z, \overline{z}) \in \mathbb{R}$ for all $z \in \mathbb{C}^n$.
\end{definition}
\begin{definition}[Conjugate-flattening Polynomial]
    A conjugate-flattening function $P(z) = p(z, \overline{z})$ is a conjugate-flattening polynomial if $p(z, u)$ is a polynomial with complex variables and coefficients.
\end{definition}

These definitions can be used in Theorem~\ref{thm:timesafe} for the collection of functions to ensure that $B_t(z) \in \mathbb{R}$ for any $t \geq 0, z \in \sysspace$.
Being able to differentiate between conjugate-flattening functions and polynomials will be useful when we discuss the generation of barrier certificates in Section~\ref{sec:hsos:bcs}.

\begin{remark}    
A barrier certificate that returns complex values, $B_t(z) \in \mathbb{C}$, cannot be used since then one cannot reason about the ordering of the barriers with respect to the equations in \eqref{eq:barrier}.
Additionally, using aspects of a complex number (\ie modulus, real or imaginary parts) can be represented by a conjugate-flattening function in some way, \eg $\abs{z}^2 = z \overline{z}$.
\end{remark}

\section{Computation of Barrier Certificates through Hermitian Sum of Squares}
\label{sec:generation}
There exist several approaches for computing a barrier certificate given the specification and dynamics of a system.
Most approaches are automatic and include the usage of neural networks and SMT (Satisfiability Modulo Theory) solvers to compute a barrier \cite{FOSSIL,SMT,abate2024safe}.
However, the standard approach is to make use of sum of squares (SOS) optimization in order to find a suitable barrier~\cite{Parrilo2003}.
Unlike other techniques, SOS optimization provides an efficient method whilst remaining theoretically sound. In this section, we adapt the SOS technique to complex variables.

\subsection{Sum of Squares for Complex Numbers}
A polynomial with real variables and coefficients, $p(x)$, is referred to as a sum of squares (SOS) if $p(x) = \sum_k p_k(x)^2$, where $p_k$ are polynomials (of any degree) for $k \geq 1$.
There are two reasons that SOS polynomials are useful for generating barrier certificates:
\begin{enumerate}
    \item SOS polynomials are real and positive.
    This makes SOS polynomials useful for ensuring generated functions obey theorems for safety (such as Theorem~\ref{thm:timesafe}).
    \item There is an equivalence between SOS polynomials and positive semidefinite matrices: $p(x)$ is a SOS \emph{iff} there exists a positive semidefinite matrix $Q$ such that $p(x) = v(x)^\intercal Q v(x)$ where $v(x)$ is a vector of monomial terms.
    Such matrices can be found using semidefinite programming~\cite{Parrilo2003}.
\end{enumerate}
Combining these two properties gives us an efficient and sound method for finding real polynomial barriers.

A function with complex variables and coefficients, $p$, may produce complex values, \ie $p: \mathbb{C}^n \to \mathbb{C}$.
This means that $p$ may not produce only positive, or even real, values.
Thus, we need a method to ensure $p(z)$ is real and positive for any $z \in \mathbb{C}^n$.
This can be done by considering a variation of sum of squares.

\begin{definition}[Hermitian Sum of Squares~\cite{HSOS}]
    A complex function, $p(z) : \mathbb{C}^n \to \mathbb{C}$ is a Hermitian sum of squares (HSOS) if $p(z) = \sum_k p_k(z)\overline{p_k(z)}$ where $p_k: \mathbb{C}^n \to \mathbb{C}$ are complex polynomials and $k \geq 1$.
    \label{def:hsos}
\end{definition}

\begin{remark}
Note that the main difference between HSOS and SOS is that the conjugate ($\overline{z}$) is used for HSOS.
If we restrict a HSOS polynomial to have only real variables and coefficients, then we get the standard SOS definition as $p_k(x) \overline{p_k(x)} = p_k(x)^2$.
\end{remark}

The properties previously described for SOS polynomials hold for HSOS polynomials with some slight modifications for the complex domain.

\begin{proposition}
    Let $p(z)$ be a HSOS.
    Then $p(z) \in \mathbb{R}$ for all $z \in \mathbb{C}^n$ and $p(z)$ is positive ($p(z) \geq 0$).
    \label{prop:hsos:positive}
\end{proposition}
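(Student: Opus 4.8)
The plan is to verify directly from Definition~\ref{def:hsos} that each summand $p_k(z)\overline{p_k(z)}$ already has the two desired properties, and then observe that these properties are preserved under finite sums. Writing $p(z) = \sum_k p_k(z)\overline{p_k(z)}$, the key observation is that for any complex number $w = p_k(z)$ we have $w\overline{w} = |w|^2$, which is manifestly a nonnegative real number: if $w = a + b\iu$ then $w\overline{w} = a^2 + b^2 \geq 0$.

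Concretely, I would carry out the following steps. First, fix an arbitrary $z \in \mathbb{C}^n$ and an index $k$, and set $w = p_k(z) \in \mathbb{C}$. Second, recall (or note in one line) the elementary identity $w\overline{w} = |w|^2 \in \mathbb{R}_{\geq 0}$, so each term $p_k(z)\overline{p_k(z)}$ lies in $\mathbb{R}_{\geq 0}$. Third, since $p(z)$ is a finite sum (indexed by $k \geq 1$, finitely many terms) of nonnegative reals, $p(z) \in \mathbb{R}$ and $p(z) \geq 0$. Since $z$ was arbitrary, this holds for all $z \in \mathbb{C}^n$, which is exactly the claim.

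There is essentially no obstacle here: the proposition is the complex-variable analogue of the trivial fact that a sum of squares of reals is nonnegative, and the only content is unwinding the definition of HSOS and using $w\overline{w} = |w|^2$. The one thing worth being slightly careful about is making explicit that the HSOS representation involves \emph{finitely many} terms (so that closure under summation is immediate and there are no convergence issues), and that the conjugate in the definition is what forces reality — without it, $\sum_k p_k(z)^2$ need not be real for complex $p_k$. I would keep the write-up to two or three sentences, since anything longer would be padding.

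\begin{proof}
Fix $z \in \mathbb{C}^n$. By Definition~\ref{def:hsos}, $p(z) = \sum_k p_k(z)\overline{p_k(z)}$ for finitely many complex polynomials $p_k$. For each $k$, writing $p_k(z) = a_k + b_k\iu$ with $a_k, b_k \in \mathbb{R}$, we have
\begin{equation*}
    p_k(z)\overline{p_k(z)} = (a_k + b_k\iu)(a_k - b_k\iu) = a_k^2 + b_k^2 = \abs{p_k(z)}^2 \in \mathbb{R}_{\geq 0}.
\end{equation*}
Hence $p(z)$ is a finite sum of nonnegative real numbers, so $p(z) \in \mathbb{R}$ and $p(z) \geq 0$. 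As $z \in \mathbb{C}^n$ was arbitrary, the claim follows.
\end{proof}
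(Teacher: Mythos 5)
Your proof is correct and uses the same key step as the paper: rewriting each term $p_k(z)\overline{p_k(z)}$ as $\abs{p_k(z)}^2$ and concluding that a finite sum of nonnegative reals is a nonnegative real. You simply unwind the identity $w\overline{w}=\abs{w}^2$ a little more explicitly than the paper does; there is no substantive difference.
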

\begin{proof}
    \begin{equation*}
    p(z) = \sum_k p_k(z)\overline{p_k(z)} = \sum_k \abs{p_k(z)}^2
    \end{equation*}
    It is clear that the right-hand side is a real value and is also positive.
    \qed
\end{proof}

Additionally, there is a notion of positive semidefinite matrices for complex vectors as well.

\begin{definition}[Positive semidefinite~\cite{Horn_Johnson_2012}]
A complex $n \times n$ matrix $Q$ is (Hermitian) positive semidefinite if $z^\dagger  Q z \geq 0, \forall z \in \mathbb{C}^n$.
\end{definition}

With these definitions, we can now show an equivalence between HSOS polynomials and (Hermitian) positive semidefinite matrices.

\begin{proposition}
    Let $p(z): \mathbb{C}^n \to \mathbb{C}$ be a conjugate-flattening polynomial of degree $2d$.
    Then, $p(z)$ is a HSOS iff there exists a (Hermitian) positive semidefinite matrix, $Q$, such that $p(z) = v(z)^\dagger  Q v(z)$, where $v(z)$ is a vector of all monomials of degree less than $d$.
    \label{prop:hsos:semidefinite}
\end{proposition}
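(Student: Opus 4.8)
The plan is to prove both directions of the equivalence by mimicking the classical real SOS--PSD argument, but working with the conjugate-transpose (Hermitian) inner product instead of the ordinary transpose.

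\textbf{($\Leftarrow$) PSD matrix gives an HSOS.} Suppose $p(z) = v(z)^\dagger Q v(z)$ with $Q$ Hermitian positive semidefinite. Since $Q \succeq 0$, by the spectral theorem (for Hermitian matrices) we can write $Q = \sum_k \lambda_k w_k w_k^\dagger$ with $\lambda_k \geq 0$, or more directly $Q = L^\dagger L$ for some matrix $L$ (a Cholesky-type factorization valid for any Hermitian PSD matrix). Then
\begin{equation*}
p(z) = v(z)^\dagger L^\dagger L v(z) = (L v(z))^\dagger (L v(z)) = \sum_k \overline{(Lv(z))_k} (Lv(z))_k = \sum_k \abs{(Lv(z))_k}^2.
\end{equation*}
Each entry $(Lv(z))_k$ is a complex polynomial in $z$ (a linear combination of the monomials in $v(z)$), so setting $p_k(z) = (Lv(z))_k$ exhibits $p$ as $\sum_k p_k(z)\overline{p_k(z)}$, i.e. an HSOS in the sense of Definition~\ref{def:hsos}.

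\textbf{($\Rightarrow$) An HSOS gives a PSD matrix.} Conversely, suppose $p(z) = \sum_k p_k(z)\overline{p_k(z)}$. Because $p$ is a conjugate-flattening polynomial of degree $2d$, each monomial appearing in $p$ has total degree at most $2d$ in the variables $z, \overline z$ taken together; a standard Newton-polytope/degree argument shows each $p_k$ can be taken to have degree at most $d$ in $z$ (any higher-degree monomials in the $p_k$ would contribute a term $\abs{\cdot}^2$ of degree $> 2d$ that cannot be cancelled, since the cross terms $p_k \overline{p_k}$ are all manifestly nonnegative). Hence we may write $p_k(z) = c_k^\dagger v(z)$ where $c_k \in \mathbb{C}^N$ is the coefficient vector of $p_k$ over the monomial basis $v(z)$ (of all monomials in $z$ of degree $\leq d$). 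Then
\begin{equation*}
p(z) = \sum_k \overline{c_k^\dagger v(z)} \, c_k^\dagger v(z) = \sum_k v(z)^\dagger c_k c_k^\dagger v(z) = v(z)^\dagger \Bigl( \sum_k c_k c_k^\dagger \Bigr) v(z),
\end{equation*}
so $Q := \sum_k c_k c_k^\dagger$ works: each $c_k c_k^\dagger$ is Hermitian PSD (since $z^\dagger c_k c_k^\dagger z = \abs{c_k^\dagger z}^2 \geq 0$), hence so is the sum.

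\textbf{The main obstacle} is the same subtlety that arises in the real case: the representation $p(z) = v(z)^\dagger Q v(z)$ does not determine $Q$ uniquely, because the monomials of $v(z)$ (and the products $\overline{(v(z))_i}(v(z))_j$) satisfy algebraic relations, so in the ($\Rightarrow$) direction one should be careful that ``read off the coefficient vectors'' yields a well-defined $Q$ reproducing $p$ — but this is fine because we \emph{construct} $Q$ from an explicit HSOS decomposition rather than trying to invert the map. The genuinely new point relative to real SOS is justifying the degree bound on the $p_k$ in the complex/conjugate-flattening setting; I would handle this by noting that in $\sum_k \abs{p_k(z)}^2$ no cancellation of the top-degree part is possible (it is a sum of squared moduli), so the highest-degree monomials of each $p_k$ must already be consistent with $\deg p = 2d$, forcing $\deg p_k \leq d$. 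I would also remark that this is precisely the HSOS analogue of Propositions for real SOS, and cite \cite{HSOS} for the underlying complex Positivstellensatz-type machinery if a fuller treatment of the degree bookkeeping is wanted.
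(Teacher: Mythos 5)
Your proof is correct and follows essentially the same route as the paper: in the HSOS $\Rightarrow$ PSD direction, you bound $\deg p_k \leq d$, write $p_k(z)$ as a linear functional of $v(z)$, and assemble $Q$ as a Gram matrix of the form $\sum_k c_k c_k^\dagger = B^\dagger B$; in the reverse direction you factor $Q = L^\dagger L$ and read off the $p_k$ as entries of $L v(z)$. The paper's version is terser (it justifies the degree bound in one sentence via $\deg p_k = \deg \overline{p_k}$, and performs the matrix assembly by an explicit index computation), while you make the ``no top-degree cancellation in a sum of squared moduli'' argument explicit and flag the non-uniqueness of $Q$; these are expositional differences only, not a different proof.
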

\begin{proof}
    Firstly, assume that $p(z)$ is a HSOS.
    Therefore, $p(z) = \sum_k p_k(z) \overline{p_k(z)}$ where $p_k$ is a polynomial.
    Since $p$ has degree $2d$ and $\deg(p_k(z)) = \deg(\overline{p_k(z)})$, then each $p_k$ is of degree up to $d$.
    Therefore, $p_k(z) = B_k v(z)$, where $B_k$ is a (row) vector of complex coefficients.
    
    Let $B$ be the matrix whose rows are $B_k$ and elements are $(B)_{kl}$.
    
    By expanding $p(z)$, we have that
    
    \begin{equation*}
        \begin{aligned}
            p(z) &= \sum_k B_k v(z) \overline{B_k v(z)}
            \\
            & = 
            \sum_k \sum_{l,m} (B_k v(z))_l (\overline{B_k v(z)})_m
            \\
            & = 
            \sum_k \sum_{l,m} (B)_{kl} v(z)_l (\overline{B})_{km} (\overline{v(z)})_m
            \\
            & = 
            \sum_k \sum_{l,m} (B^\dagger )_{mk} (B)_{kl} (v(z))_l (\overline{v(z)})_m
            \\
            & = 
            \sum_{l,m} \bigg( \sum_k (B^\dagger )_{mk} (B)_{kl} \bigg) (v(z))_l \overline{(v(z))_m}
            \\
            & = 
            \sum_{l,m} (B^\dagger  B)_{ml} (v(z))_l \overline{(v(z))_m}
            \\
            & =
            v(z)^\dagger  (B^\dagger B) v(z).
        \end{aligned}
    \end{equation*}
    Since $B$ is a complex matrix, then $Q = B^\dagger B$ is positive semi-definite.
    
    Now if we assume that $p(z) = v(z)^\dagger  Q v(z)$ with $Q$ being positive semi-definite, then there exists a complex matrix $B$ such that $Q=B^\dagger B$.
    Simply by performing the summation in reverse gives us that $p(z) = \sum_k p_k(z) \overline{p_k(z)}$ where $p_k(z) = B_k v(z)$.
\end{proof}

Since properties of SOS polynomials are shared by HSOS polynomials, we can make use of computation techniques used to find barrier certificates for real systems and adapt them for complex systems.

\subsection{Semi-algebraic Sets}
In the usual method for computing barrier certificates, the sets used (\eg the unsafe and initial sets) are assumed to be semi-algebraic \cite{Bochnak98}, \ie{} they can be described using vectors of polynomials.
For example, if $X \subseteq \mathbb{R}^n$ is semi-algebraic, we can write
\begin{equation*}
    X = \{x \in \mathbb{R}^n: g(x) \geq 0 \},
\end{equation*}
where $g(x)$ is a vector of polynomials and the ordering is applied element-wise ($g_j(x) \geq 0$ for all $j$).

This definition does not immediately hold for complex variables, since complex values cannot be ordered.
Thus, semi-algebraic sets for complex numbers must be defined by a vector of conjugate-flattening polynomials, \ie{} $g(z) \in \mathbb{R}^n$ for all $z \in \mathbb{C}^n$.
We introduce a definition for complex semi-algebraic sets.
\begin{definition}[Complex semi-algebraic set]
    A set $Z \subseteq \mathbb{C}^n$ is complex semi-algebraic if
    \begin{equation*}
        Z = \{z \in \mathbb{C}^n: g(z) \geq 0 \};
    \end{equation*}
    for some $g$ being a vector of conjugate-flattening polynomials and the ordering ($g(z) \geq 0$) is applied element-wise.
\end{definition}
In our case, we assume that the sets $\sysspace, \init{\sysspace}$, and $\unsafe{\sysspace}$ are (complex) semi-algebraic and use the vectors $g_I$, $\init{g}$, and $\unsafe{g}$ to describe their elements respectively.

\subsection{HSOS Equations for Barrier Certificate Constraints}
\label{sec:hsos:bcs}
We can now state the HSOS equations that need to be computed to get a function that satisfies the constraints of a barrier certificate.
\begin{lemma}
    \label{lem:hsossafe}
    Let $S = \timesys$; $g_I$, $g_0$, $g_u$ be given vectors of conjugate-flattening polynomials describing $\sysspace, \init{\sysspace},$ and $\unsafe{\sysspace}$, respectively.
    Suppose there exists a collection of (conjugate-flattening) polynomials $\{B_t(z) = b_t(z, \overline{z})\}_{t\in \mathbb{Z}_{\geq 0}}$; positive numbers $k \in \mathbb{Z}_{\geq 0}$, $\epsilon, \gamma \in \mathbb{R}_{\geq 0}$, $d > k (\epsilon + \gamma)$; and vectors of Hermitian sum of squares $\lambda_{\text{U};t}(z)$, $\lambda_{\text{Init}}(z)$, $\lambda_t(z)$, $\lambda_{t,t'}(z)$ and $\hat{\lambda}_t(z)$ such that the expressions:
    \begin{subequations}
    \label{eq:barrierhsos}
    \begin{align}
        & - B_0(z) - {\lambda_{\text{Init}}}(z)^\intercal \init{g}(z); \\
        & B_t(z) - {\lambda_{\text{U};t}}(z)^\intercal \unsafe{g}(z) - d,  \forall t \in \mathbb{Z}_{\geq 0}; \label{eq:barrierhsos:unsafe} \\
        & - B_t(f_t(z)) + B_t(z) - {\lambda_t}(z)^\intercal g_I(z) + \epsilon, \forall t \in \mathbb{Z}_{\geq 0}; \\
        & - B_{t+1}(z) + B_t(z) - {\lambda_{t,t+1}}(z)^\intercal g_I(z) + \gamma, \forall t \in \mathbb{Z}_{\geq 0}; \\
        & - B_{t+k}(f_{t+k-1} ( \hspace{0cm} \dots \hspace{0cm} f_{t+1} ( f_{t} (z)))) + B_{t}(z) - \hat{\lambda}_{t}(z)^\intercal g_I(z),
        \label{eq:barrierhsos:kind}
        \\ & \nonumber \qquad{} \forall t \in \mathbb{Z}_{\geq 0} \text{ such that } t = rk \text{ for } r \in \mathbb{Z}_{\geq 0}; 
    \end{align}
    \end{subequations}
    are Hermitian sum of squares.
    Then $B$ satisfies Theorem~\ref{thm:timesafe} and the safety of $S$ is guaranteed.
\end{lemma}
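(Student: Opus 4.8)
The plan is to show that the HSOS conditions in \eqref{eq:barrierhsos} imply, line by line, the corresponding inequalities in \eqref{eq:barrier}, after which Theorem~\ref{thm:timesafe} can be invoked verbatim. The single engine behind every step is the ``SOS relaxation is sound'' observation, phrased for the complex domain: if $q(z) - \lambda(z)^\intercal g(z)$ is a HSOS and $\lambda(z)$ is a \emph{vector} of HSOS (so each component $\lambda_j$ is separately a HSOS), then by Proposition~\ref{prop:hsos:positive} we have $q(z) - \lambda(z)^\intercal g(z) \geq 0$ for all $z \in \mathbb{C}^n$, while on the semi-algebraic set $\{z : g(z) \geq 0\}$ each summand $\lambda_j(z) g_j(z)$ is a product of two nonnegative reals, so $\lambda(z)^\intercal g(z) \geq 0$ there; combining the two gives $q(z) \geq 0$ on $\{z : g(z) \geq 0\}$. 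All quantities here are genuinely real: the $B_t$ are conjugate-flattening polynomials, the entries of $g_I, \init{g}, \unsafe{g}$ are conjugate-flattening polynomials, and the HSOS multipliers are real by Proposition~\ref{prop:hsos:positive}, so the inequalities are meaningful; in particular $B_t(z) = b_t(z,\overline z) \in \mathbb{R}$ for all $t$ and all $z \in \sysspace$, as required by Theorem~\ref{thm:timesafe}.

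Applying this fact to each line of \eqref{eq:barrierhsos} in turn: the first expression together with $\init{g}$ yields $B_0(z) \leq 0$ on $\init{\sysspace}$, i.e.\ \eqref{eq:barrier:initial}; expression~\eqref{eq:barrierhsos:unsafe} together with $\unsafe{g}$ yields $B_t(z) \geq d$ on $\unsafe{\sysspace}$ for every $t \in \mathbb{Z}_{\geq 0}$, i.e.\ \eqref{eq:barrier:unsafe}; the third expression together with $g_I$ yields $B_t(f_t(z)) - B_t(z) \leq \epsilon$ on $\sysspace$ for every $t$, i.e.\ \eqref{eq:barrier:diff}; the fourth expression together with $g_I$ yields $B_{t+1}(z) - B_t(z) \leq \gamma$ on $\sysspace$ for every $t$, i.e.\ \eqref{eq:barrier:mode}; and expression~\eqref{eq:barrierhsos:kind} together with $g_I$ yields $B_{t+k}(f_{t+k-1}(\cdots f_t(z))) - B_t(z) \leq 0$ on $\sysspace$ for every $t = rk$ with $r \in \mathbb{Z}_{\geq 0}$, i.e.\ \eqref{eq:barrier:ind}. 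Since the hypotheses already supply $k \geq 1$, $\epsilon,\gamma \in \mathbb{R}_{\geq 0}$ and $d > k(\epsilon+\gamma)$, the collection $\{B_t\}$ together with $k,\epsilon,\gamma,d$ is precisely a $k$-inductive hybrid barrier certificate in the sense of Theorem~\ref{thm:timesafe}, whence the safety of $S$ with respect to $\unsafe{\sysspace}$ follows.

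This is essentially bookkeeping, so I do not anticipate a real obstacle; the only points needing care are (i) well-formedness of the composed expressions, i.e.\ that $B_t \circ f_t$ and the nested composition in \eqref{eq:barrierhsos:kind} are again conjugate-flattening polynomials so that the HSOS hypothesis is even meaningful and matches Definition~\ref{def:hsos} --- this holds here because each $f_t$ acts linearly (it is a unitary), and composing a conjugate-flattening polynomial with a linear map again gives one --- and (ii) being explicit that a ``vector of HSOS'' has each coordinate nonnegative, so that $\lambda(z)^\intercal g(z)$ is a sum of nonnegative terms on the relevant set rather than merely a single HSOS. Once these are spelled out, the five implications are immediate and the lemma reduces to a direct appeal to Theorem~\ref{thm:timesafe}.
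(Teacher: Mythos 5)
Your proposal is correct and follows essentially the same route as the paper's proof: both rely on the observation that a HSOS expression of the form $q(z) - \lambda(z)^\intercal g(z)$ (with $\lambda$ a vector of HSOS) is nonnegative everywhere, and that on the semi-algebraic set $\{g \geq 0\}$ the multiplier term is nonnegative as well, forcing $q(z) \geq 0$ there; the paper spells this out for the unsafe condition and handles the other four by ``similar reasoning,'' exactly as you do. Your added remarks on well-formedness of the composed expressions and on componentwise nonnegativity of the HSOS vector are worthwhile clarifications but do not change the argument.
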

\begin{proof}
    We show that expression~\eqref{eq:barrierhsos:unsafe} being HSOS implies that $B$ satisfies constraint~\eqref{eq:barrier:unsafe} of Theorem \ref{thm:timesafe}.
    Firstly, note that since $\lambda_{\text{U};t}$ is a vector of HSOSs, then $\lambda_{\text{U};t}(z) \geq 0, \forall z \in \sysspace$.
    Additionally, for $\unsafe{z} \in \unsafe{\sysspace}$, we have that $g(\unsafe{z}) \geq 0$.
    Therefore, $\lambda_{\text{U};t}(\unsafe{z})^\intercal g(\unsafe{z}) \geq 0$ and we must have that $B_t(\unsafe{z}) - d \geq 0$ as Equation~\eqref{eq:barrierhsos:unsafe} is HSOS.
    Thus, constraint~\eqref{eq:barrier:unsafe} is satisfied.
    
    Similar reasoning can be applied to the other expressions in \eqref{eq:barrierhsos} and their respective counterparts in \eqref{eq:barrier}.
    This results in the conditions in Theorem~\ref{thm:timesafe} being satisfied and, therefore, safety is guaranteed.
\end{proof}

In the same way HSOS is related to SOS, Lemma~\ref{lem:hsossafe} is similar to theorems and lemmas for real dynamical systems, with the major difference being the usage of HSOS equations rather than SOS equations to allow us to use complex variables.

\begin{remark}
    Note that the barrier certificate generated is a collection of conjugate-flattening polynomials, but the definition of $k$-inductive hybrid barrier certificates given in Theorem~\ref{thm:timesafe} does not require the barrier certificate to use polynomials.
    Simply by restricting the barrier certificate to conjugate-flattening polynomials allows us to easily adapt the SOS methods to the complex domain while retaining a sufficient condition for safety of the system.
\end{remark}

\subsection{Algorithm for finding a Barrier Certificate}
\begin{algorithm}[t]
\caption{Finding a barrier certificate}\label{alg:barrier}
\KwIn{Constants: $k, \delta, \epsilon, \gamma$; discrete-time complex-space dynamical system: $S = \timesys$; vectors describing associated semi-algebraic sets: $g_I, g_0, g_u$}
Set $d > k(\epsilon + \gamma)$.

\textbf{Setup symbolic function:} Set $B(z) = b(z,\overline{z})$ to be a parameterized conjugate-flattening polynomial of degree up to $\delta$.

\textbf{Define symbolic lambda polynomials:} parameterized conjugate-flattening polynomials $\lambda_{\text{Init}}(z)$, $\lambda_{U;t}(z)$, $\lambda_t(z)$, $\lambda_{t,t'}(z)$ and $\hat{\lambda}_t(z)$ for each $t, t'$.

\textbf{Add to HSOS solver:} Add the various $\lambda$ polynomials, and the equations in \eqref{eq:barrierhsos} (using the $B, \lambda$ polynomials and constants $k, \epsilon, \gamma, d$) as HSOS constraints to the HSOS solver.

\textbf{Run the HSOS solver:} convert the HSOS equations of \eqref{eq:barrierhsos} into an equivalent semidefinite program (via Proposition~\ref{prop:hsos:semidefinite}) and solve using a semidefinite programming algorithm.

\eIf{feasible}{
\Return $B(z)$ with coefficients set to the values from the HSOS solver
}
{
\textbf{error:} conjugate-flattening HSOS polynomial barrier does not exist for system $S$ with inputs given
}
\end{algorithm}
We now provide an algorithm for finding a barrier certificate for a discrete-time complex-space system $S = \timesys$, given in Algorithm~\ref{alg:barrier}.
In order to compute a barrier, we consider $B(z)$ as a $\delta$-degree conjugate-flattening polynomial.
We parameterize the coefficients of $B$ as a vector of complex values $\beta \in \mathbb{C}^\kappa$, where $\kappa \approx \sum_{j=0}^\delta (2n)^{j} = \frac{1 - (2n)^{\delta+1}}{1 - 2n}$ is the number of coefficients in a $\delta$-degree polynomial with $2n$ variables ($2n$ comes from the fact that we need to consider $z$ and $\overline{z}$ terms).
We input $B(z)$ with its parameterized coefficients $\beta$ into the equations given in \eqref{eq:barrierhsos}.
These equations are then given to an appropriate HSOS solver, which attempts to find appropriate values for $\beta$.

For specific hyper-parameters a barrier may not be produced, hence why the HSOS may return \emph{infeasible} and there is an error branch (line 9) in Algorithm~\ref{alg:barrier}.
This can be due to several reasons, for example the barrier may have a degree that is higher than $\delta$ or the values of $\epsilon$ and $\gamma$ need to be changed.
The system may simply be unsafe.
Alternatively, the required barrier for the system and unsafe region given may need to be non-polynomial, in which case the HSOS technique cannot be used to generate a barrier.

We note that it is possible to adapt Algorithm~\ref{alg:barrier} such that some of the constants could instead be added to the semidefinite program as constraints.
The constants $\epsilon$ and $\gamma$ could be added as to the objective function of the semidefinite program, and $d$ can additionally be added as a constraint (with $d > k(\epsilon + \gamma)$).
This would allow Algorithm~\ref{alg:barrier} to find a barrier that is efficient in its separation of regions.


\section{Case Studies}
\label{sec:casestudies}
We now show how the theory that we have developed can be used in practice.
We provide details of our implementation of a toy HSOS solver and demonstrate its usage for barrier certificates on some example quantum circuits.

\subsection{Experimental Setup}
For our experiments we provide the dynamical system $f_t(z)$; functions describing the semi-algebraic sets: $g_I, \init{g}, \unsafe{g}$; parameters $k, \epsilon, \gamma$ from Lemma~\ref{lem:hsossafe}; and the maximum degree for the barrier certificate, represented by $\deg(B)$.

\subsubsection{HSOS Solver}
Standard SOS solvers, such as SOSTOOLS~\cite{sostools}, use semidefinite programs to compute the polynomials in the summation while the solver handles the conversion to and from SOS equations.
Unfortunately, as far as we are aware, there does not exist any HSOS solver or even a complex semidefinite optimiser that makes use of the speedups shown in \cite{Gilbert17} as current semidefinite solvers that allow for complex programs convert them to equivalent real programs.

With the equivalence shown between HSOS and positive semidefinite matrices in Proposition~\ref{prop:hsos:semidefinite}, we have developed a simple HSOS solver in Python that converts the HSOS equations into (complex) semidefinite programs.
We make use of the SymPy~\cite{sympy} package to represent polynomials, and the semidefinite programs are solved using PICOS~\cite{PICOS}, a Python interface to relevant semidefinite solvers, giving appropriate coefficients for the HSOS equation.\footnote{The complex semidefinite programs are converted into real programs by PICOS.}

\subsubsection{Verification}
To improve the soundness of the implementation of Algorithm~\ref{alg:barrier} and numerically ensure the correctness of the barrier certificate generated, we make use of SMT solvers to formally check that the generated barrier certificate satisfies Equations \eqref{eq:barrier}.
For a collection of barriers, $\{B_t(z)\}_t$ that are generated, we encode the equations in \eqref{eq:barrier} as proof obligations and look for a counter-example.
We additionally encode that the barrier must produce real values.

The equations are encoded by removing the existential quantifier ($\forall$), replacing set membership ($\in$) with satisfaction of the relevant semi-algebraic polynomial vector ($g$), and negating the rest of the statement.
For example, Equation~\eqref{eq:barrier:initial} ($\init{B}(z) \leq 0, \forall z \in \init{\sysspace}$) is encoded by the proof obligation $\init{B}(z) > 0 \land \init{g}(z) \geq 0$ (again the ordering of $\init{g}(z)$ is applied element-wise).

We make use of the Z3~\cite{Z3} Python package to setup the proof obligations, and dReal~\cite{dReal} to verify them.\footnote{We include functionality to verify using Z3 as well.}
If a proof obligation is \emph{unsatisfiable}, then that means the relevant equation is true.
However, if we receive \emph{satisfiable} or \emph{$\delta$-satisfiable} (in the case of dReal), then we have a counter-example and so the relevant equation is not satisfied (in the case of dReal, being a $\delta$-sat solver, the equation may not be actually satisfied).
We check each equation and barrier certificate function in a separate call to the SMT solver as this makes it faster to check each equation and we can find out why a certain barrier fails if we get a counter-example.

\subsubsection{Device Details}
The experiments were performed on a laptop with an Intel(R) Core(TM) i5-10310U CPU @ 1.70GHz x 8 cores processor and 16GB of RAM using Ubuntu 20.04.3 LTS.

\subsubsection{Code Availability}
The code for Algorithm~\ref{alg:barrier}, our conversion of HSOS to semidefinite programs and verifying generated barrier certificates is available at \url{https://github.com/marco-lewis/discrete-quantum-bc}.

\subsection{Phase (Z) Gate}
\label{sec:case:z}
We begin with a simple example by considering the $Z$-gate introduced in Section~\ref{sec:background:quantum}.
Consider a simple circuit with one qubit, $\sysspace = \mathbb{C}^2$, that applies the $Z$-gate repeatedly, with dynamics described by
\begin{equation*}
    f_t(z) = Zz.
\end{equation*}
We specify
\begin{equation*}
\begin{aligned}
    & \init{\sysspace} = \{ z \in \sysspace : \abs{(z)_0}^2 \geq 0.9 \}, \text{ and}\\
    & \unsafe{\sysspace} = \{ z \in \sysspace : \abs{(z)_1}^2 \geq 0.2\},
\end{aligned}
\end{equation*}
as the initial and unsafe region respectively.
These regions can be thought of quantum states with a certain state being measured with a set probability.
Here, $\init{\sysspace}$ is the region where $\ket{0}$ is likely to be measured with at least 90\% probability (similarly $\unsafe{\sysspace}$, $\ket{1}$ and 20\% respectively).

By using Algorithm~\ref{alg:barrier} with $k=1, \epsilon=0.01, \gamma=0.01$; we find the barrier (rounded to 3 d.p.)
\begin{equation*}
    B(z) = 4.453 -0.848 {(z)_{0}}^2 - 3.871 (z)_{0} \overline{(z)_{0}} + 2.274 (z)_{1} \overline{(z)_{1}} - 0.848 \overline{(z)_{0}}^2
\end{equation*}
separates the two regions and ensures safety.

\subsection{NOT (X) Gate}
\label{sec:case:x}
In this example, we consider a $X$-gate that is being applied repeatedly to a set of $n$ qubits, $\sysspace = \mathbb{C}^n$.
The dynamical system is described by
\begin{equation*}
    f_t(z) = X^{\otimes n}z,
\end{equation*}
and we have the initial and unsafe region being
\begin{equation}
\label{eq:x:init}
\begin{aligned}
    & \init{\sysspace} = \{ z \in \sysspace : \frac{1}{2^n} - err \leq \abs{(z)_j}^2 \leq \frac{1}{2^n} + err \}, \text{ and} \\
    & \unsafe{\sysspace}^p = \{ z \in \sysspace : \abs{(z)_p}^2 \geq \frac{1}{2^n} + 2 err\},
\end{aligned}
\end{equation}
where $err = \frac{1}{10^{n+1}}$ and $p \in \{0, \dots, 2^n - 1\}$.
The initial region represents the set of quantum states that is close to the uniform superposition of quantum states, $\ket{+}^n = \frac{1}{\sqrt{2^n}}\sum_{j=0}^{2^n} \ket{j}$.
The unsafe region determines that the system should avoid exiting the region for some target state, $\ket{p}$.

The barrier generated for this system with $n=2, p=0, k=2, \epsilon=0.01, \gamma=0$; is
\begin{equation*}
\begin{aligned}
B(z) = & - 51.56 + 204.89 (z)_0 (\overline{z})_0 + 0.03 (z)_1 (\overline{z})_1 + 0.03 (z)_2 (\overline{z})_2 + 0.03 (z)_3 (\overline{z})_3 \\
& - 0.03 (z)_1 (\overline{z})_2 - 0.03 (z)_2 (\overline{z})_1 - 0.03 (z)_0 (\overline{z})_3 - 0.03 (z)_3 (\overline{z})_0,
\end{aligned}
\end{equation*}
with which safety is ensured.

\subsection{Alternating X and Z Gates}
\label{sec:case:h}
\begin{figure}[t]
    \centering
    \includegraphics[width=.3\textwidth]{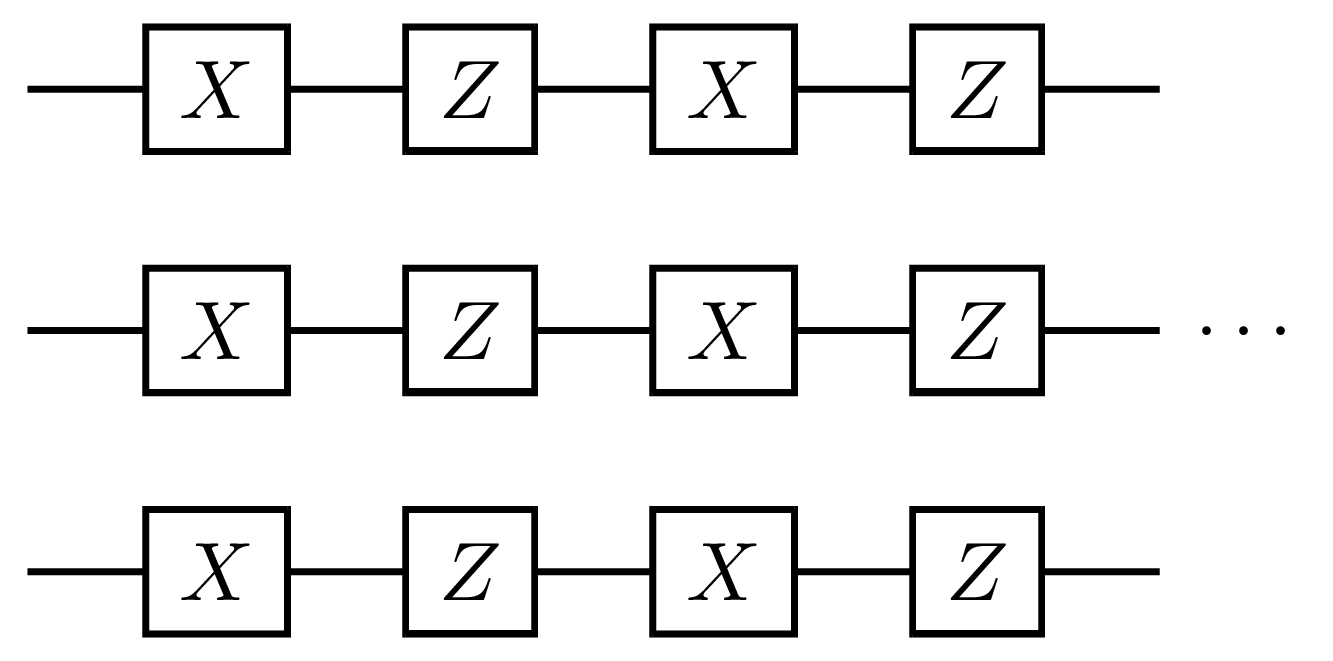}
    \caption{Quantum circuit for alternating between $X$ and $Z$ gates with 3 qubits.}
    \label{fig:xz-circuit}
    \Description{A 3-qubit circuit consisting of $X^{\otimes 3}, Z^{\otimes 3}, X^{\otimes 3}, Z^{\otimes 3}, \dots$}
\end{figure}

Here we consider a quantum circuit that alternates between the two operations used in our previous example.
The dynamical system is described by
\begin{equation*}
    f_t(z) = \begin{cases}
        X^{\otimes n}z, & \text{if $t$ is even;} \\
        Z^{\otimes n}z, & \text{otherwise.}
    \end{cases}
\end{equation*}
The associated quantum circuit, with $n=3$, is given in Figure~\ref{fig:xz-circuit}.

We specify our behavior to be such that the initial region is near one of the basis states and we want the system to behave in a way such that the basis state either occurs with high probability or low probability.
This behavior is specified by the regions
\begin{align*}
    & \init{\sysspace}^{p} =  \{ z \in \sysspace : \abs{(z)_p}^2 \geq 0.9 \}, \text{ and} \\
    & \unsafe{\sysspace}^{p} = \{ z \in \sysspace : 0.2 \leq \abs{(z)_p}^2 \leq 0.8 \},
\end{align*}
for $p \in \{0, 1, \dots, 2^n - 1\}$.

The generated functions for the barrier certificate that ensure the safety of the system with $n=2, k=2, \epsilon=0.01, \gamma=0.01, p=0$, are
\begin{align*}
    B_0(z) = &
    0.9117 - 1.0307 (z)_0 (\overline{z})_0 - 0.0095 (z)_0 (\overline{z})_3 + 0.0219 (z)_1 (\overline{z})_1 \\
    & + 0.0011 (z)_1 (\overline{z})_2 + 0.0011 (z)_2 (\overline{z})_1 - 0.0004 (z)_2 (\overline{z})_2 \\
    & - 0.0095 (z)_3 (\overline{z})_0 + 0.0066 (z)_3 (\overline{z})_3,
    \\
    B_1(z) = &
    0.902 -1.0212 (z)_0 (\overline{z})_0 - 0.0136 (z)_0 (\overline{z})_3 + 0.0289 (z)_1 (\overline{z})_1 \\
    & + 0.0058 (z)_1 (\overline{z})_2 + 0.0058 (z)_2 (\overline{z})_1 + 0.0083 (z)_2 (\overline{z})_2 \\
    & - 0.0136 (z)_3 (\overline{z})_0 + 0.0136 (z)_3 (\overline{z})_3,
\end{align*}
where $B_0$ is the function for the dynamics evolving according to $X$ and $B_1$ for $Z$ respectively.

\subsection{Grover's Algorithm}
\label{sec:case:grover}
\begin{figure}[t]
    \centering
    \includegraphics[width=.7\textwidth]{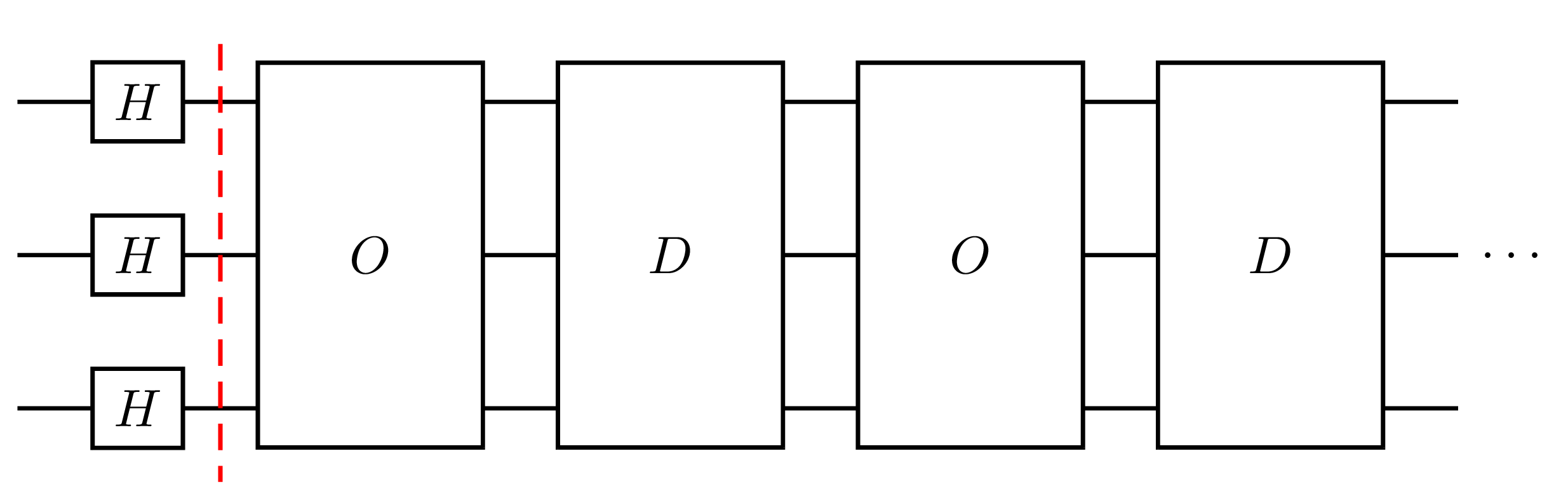}
    \caption{The quantum circuit for a 3-qubit version of Grover's algorithm.
    The initial Hadamard gates applied to the state $\ket{000}$ prepare the quantum state for the repeated operations applied to it.}
    \label{fig:grover-circuit}
    \Description{A 3-qubit circuit consisting of the operations $H^{\otimes 3}, O, D, O, D \dots$, where $O$ is an oracle operation and $D$ is the diffusion operation.}
\end{figure}
Now we want to verify properties of Grover's search algorithm~\cite{Grover96}.
The database search problem is: given a function $h:\{0,1\}^n \to \{0,1\}$ such that $h(m) = 1$ for a unique $m \in \{0,1\}^n$ and $h(x) = 0$ for $x \neq m$, find $m$ with as few calls to $h$ as possible.\footnote{There is a version of the problem that has several marked elements, but we only consider a single marked element.}
Grover's algorithm solves this by putting the quantum state into superposition (using Hadamard gates) and then alternating between an oracle, $O \ket{x} = (-1)^{h(x)} \ket{x}$, and a diffusion step, $D = H^{\otimes n} (2 \ketbra{0^n} - I_n)H^{\otimes n}$.
\emph{I.e.,} the initial quantum state is $\ket{\phi} = \frac{1}{\sqrt{2^n}} \sum_x \ket{x}$ and the operation $G = D\cdot O$ is applied to the quantum state $r$ times, where $r$ depends on the number of qubits.
This moves the quantum state into a state where $\ket{m}$ is measured with high probability.
The circuit for Grover's algorithm is given in Figure~\ref{fig:grover-circuit}.

The evolution of the quantum state can be viewed geometrically~\cite{Dorit99, nielsen_chuang_2010} as
\begin{equation*}
    G^r \ket{\phi} = \cos{\Big(\frac{2r +1}{2}\theta \Big)} \ket{\lnot m} + \sin{\Big(\frac{2r +1}{2}\theta \Big)} \ket{m},
\end{equation*}
where $0 \leq \theta \leq \pi/2$ such that $\sin{\big(\theta\big)} = \frac{2\sqrt{2^n-1}}{2^n}$ and $\ket{\lnot m} = \frac{1}{\sqrt{2^n-1}} \sum_{x\neq m} \ket{x}$.
In a setting with no noise and $n > 1$, we see that only $\ket{m}$ can be measured with high probability no matter the value of $r$.
Any unmarked element, $x$, will have at most probability $\frac{1}{2^n - 1}$.
However, if we disturb the initial state slightly, is this still the case?
Previous work has shown that Grover's algorithm with more iterations can still return the marked state even when the initial state is not in equal superposition~\cite{Biham99} but it does not show if an unmarked state is restricted.

The dynamical system for Grover's algorithm is given as
\begin{equation*}
    f_t(z) = \begin{cases}
        Oz, & \text{if $t$ is even;} \\
        Dz, & \text{otherwise.}
    \end{cases}
\end{equation*}

We set the initial region, $\init{\sysspace}$, to be a region close to the superposition of states with a slight disturbance to its amplitude.
For an $n$-qubit system, we have
\begin{align*}
    \init{\sysspace} = & \{ z \in \sysspace : \frac{1}{2^n} - err \leq \abs{(z)_j}^2 \leq \frac{1}{2^n} + err, \\
    & -\sqrt{err} \leq \Im{(z)_j} \leq \sqrt{err} \text{ for } 0 \leq j \leq 2^{n} - 1\},
\end{align*}
where, again, $err = \frac{1}{10^{n+1}}$.
This follows the definition of the initial region given in Equation~\eqref{eq:x:init} with an additional constraint on the imaginary value.

We wish to verify that no single unmarked state is ever likely, \ie an unmarked state will never be the most likely result.
This specification, for some unmarked element $p$, is given by the region
\begin{equation*}
    \unsafe{\sysspace}^p = \{ z \in \sysspace : \abs{(z)_p}^2 \geq 0.9 \}.
\end{equation*}
The value $0.9$ is to represent the unmarked state being chosen with high probability and for simplicity's sake.

We tried specifying the quantum state for $n=2$ with several different hyper-parameters but found that no barrier could be produced with a degree less than 4.
Running the algorithm with higher degrees simply takes too long given the complexity of the initial state (see the discussion in the next section).

\begin{remark}
We additionally tried changing the dynamical system into a single equation.
We experimented defining the system starting from the initial state evolving according to the dynamics $f_t(z) = D\cdot O z$ (even steps), and also tried a system where the initial state starts after applying the oracle to the initial state and the dynamics evolve according to $f_t(z) = O\cdot D z$ (odd steps).
However, no barrier could be found for either systems using the unsafe system as described before.
\end{remark}

\subsection{Discussion}
\setlength\rotFPtop{0pt plus 1fil}
\begin{sidewaystable}
    \centering
    \caption{Average runtimes for 5 runs of each experiment.
    \emph{Experiment} refers to the relevant experiment discussed in this section;
    \emph{\# Qubits} refers to the number of qubits used;
    \emph{Target State} refers to the state that is used for safety properties (where $\ket{p}$ modifies $\init{\sysspace}^p$ and $\unsafe{\sysspace}^p$);
    \emph{Generation Time} refers to the amount of time spent generating the barrier certificate, split between time performing setup and post-processing (\emph{S\&P}) and time in PICOS (\emph{PICOS});
    and \emph{Verification Time} refers to the amount of time taken to verify the barrier using SMT solvers.
    For verification a timeout was set to 300s for each call to the SMT solver.
    Entries with a * in them faced some timeout during the verification.
    Grover experiments used an oracle function where 0 is the marked state (\ie $h(0) = 1$).}
    \bgroup
    \begin{tabular}{|c|c|c|r|r|c|r|}
        \hline
         \multirow{2}{*}{\emph{Experiment}} & \multirow{2}{*}{\emph{\# Qubits}} & \multirow{2}{*}{\emph{Target State} ($\ket{p}$)} & \multicolumn{3}{c|}{\emph{Generation Time (s)}} & \multirow{2}{*}{\emph{Verification Time (s)}} \\ \cline{4-6}
         & & & \multicolumn{1}{c|}{\emph{S\&P}} & \multicolumn{1}{c|}{\emph{PICOS}} & \emph{Status} & \\ \hline
         \multirow{6}{*}{$Z$ Gate} & \multirow{2}{*}{1} 
           & $\ket{0}$ & 2.62 & 1.72 & solved & 0.15 \\
         & & $\ket{1}$ & 2.53 & 1.56 & solved & 0.16 \\ \cline{2-7}
         & \multirow{4}{*}{2} 
           & $\ket{0}$ & 86.23 & 33.85 & solved & 0.93 \\
         & & $\ket{1}$ & 87.90 & 34.34 & solved & 21.42 \\
         & & $\ket{2}$ & 89.62 & 34.32 & solved & 22.06 \\
         & & $\ket{3}$ & 89.02 & 34.28 & solved & 4.77  \\
         \hline
         \multirow{6}{*}{$X$ Gate} & \multirow{2}{*}{1} 
           & $\ket{0}$ & 2.99 & 2.14 & solved & 0.41\\
         & & $\ket{1}$ & 3.20 & 2.48 & solved & 131.26 \\ \cline{2-7}
         & \multirow{4}{*}{2}
           & $\ket{0}$ & 248.71 & 95.96 & solved & 900.37$^{*}$ \\
         & & $\ket{1}$ & 246.20 & 96.19 & solved & 900.42$^{*}$ \\
         & & $\ket{2}$ & 247.17 & 70.88 & \textbf{unsolved} & - \\
         & & $\ket{3}$ & 241.01 & 69.34 & \textbf{unsolved} & - \\
         \hline
         \multirow{6}{*}{$X$ and $Z$ Gates} & \multirow{2}{*}{1} 
           & $\ket{0}$ & 4.31 & 7.06 & solved & 0.317 \\
         & & $\ket{1}$ & 4.53 & 7.42 & solved & 1.20 \\ \cline{2-7}
         & \multirow{4}{*}{2}
           & $\ket{0}$ & 98.45 & 297.78 & solved & 160.73 \\
         & & $\ket{1}$ & 3{,}955.97 & 579.68 & solved & 2{,}104.04$^{*}$ \\
         & & $\ket{2}$ & 3{,}890.78 & 757.57 & \textbf{unsolved} & - \\
         & & $\ket{3}$ & 3{,}996.65 & 877.00 & solved & 2{,}105.83$^{*}$ \\ \cline{2-7}
         & 3 & $\ket{0}$ & - & - & \textbf{killed} & - \\ \hline
         Grover (k=1) & 2 & $\ket{1}$ & 1{,}297.32 & 523.15 & \textbf{unsolved} & - \\ \hline
         Grover (k=2) & 2 & $\ket{1}$ & 1{,}288.85 & 590.36 & \textbf{unsolved} & - \\ \hline
         Grover (even)& 2 & $\ket{1}$ & 1{,}257.80 & 98.26 & \textbf{unsolved} & - \\ \hline
         Grover (odd) & 2 & $\ket{1}$ & 1{,}261.36 & 97.90 & \textbf{unsolved} & - \\ \hline
     \end{tabular}
     \egroup
    \label{tab:results}
\end{sidewaystable}

We extended the experiments to a higher number of qubits to see how the implementation would perform.
Details of how the $Z$ gate experiment is extended is given in Appendix~\ref{app:case:z}.
The runtimes for the experiments are shown in Table~\ref{tab:results}.

While most experiments used a 2-degree barrier, the \emph{X and Zs Gates} experiment used a 4-degree barrier for the targets $\ket{1}$, $\ket{2}$, and $\ket{3}$.
Raising the degree of the barrier has a large affect on the setup for the barrier.
Further, most of the generation time was spent setting up the various polynomials rather than running the semidefinite solver in PICOS.
This time can be reduced by reducing the number of terms in the polynomial for the barrier, meaning a change in the form of the barrier could lead to faster runtimes but at the cost of not being able to generate barriers.
An alternative approach would be to cache or store the symbolic representations of functions so that they can be easily reused.

Additionally, it was found that while the implementation could find barriers for some examples, it could not for others (noted by those that were \textbf{unsolved} or \textbf{killed}).
The barriers that were timed out during verification could also be incorrect.
This is likely due to only considering a low degree polynomial for the barrier.
However, using a higher degree polynomial would take much longer in the setup phase.
It could be the case that a non-polynomial function or a different barrier scheme needs to be used to ensure safety of some quantum systems, however this should not be the case for these simple examples.

An alternative approach would be to consider a change in variables to reduce the dimensionality of the variables.
As seen in the results for Grover's algorithm, changing the dynamics of the system to use only a single system vastly reduced the time spent solving the semidefinite program, reducing the dimensionality could reduce it further.
For instance, the dynamics of Grover's algorithm could be encoded using the geometric representation instead of the standard quantum state representation.
This would allow for a reduction in the dimension of the system and could allow some properties of quantum circuits to be verified, although the properties would need to be specified using this new basis.
As it currently stands though, while some systems can be proven safe, barrier certificates are inefficient if naively applied to quantum circuits with a high number of qubits.

Finally, while the experiments were run on a device with modest resources, PICOS failed to run a semidefinite solver an example using 3 qubits (given in the \emph{X and Z Gates} experiment).
Even though a device with more RAM and a faster processor could compute a barrier for a 3-qubit system, the same issue will arise when considering a 4, 5, or 6 qubit system.
It is difficult to see barrier certificates being used beyond a low number of qubits without the usage of a change in variables.
However, barrier certificates sacrifice scalability for flexibility as will be discussed in the conclusion.

\section{Conclusion}
\label{sec:conc}
In this article, we have considered a different approach to the verification of quantum circuits through the usage of barrier certificates by treating the quantum circuits as dynamical systems.
We firstly showed how to extend the notion of barrier certificates from the real domain into the complex domain.
We then demonstrated how the standard approach for generating barrier certificates, through sum of squares, can be extended to the complex domain as well, through Hermitian sum of squares.
Finally, we provided some experiments that show the usage of the developed technique.

\begin{figure}[t]
    \centering
    \includegraphics[width=0.8\textwidth]{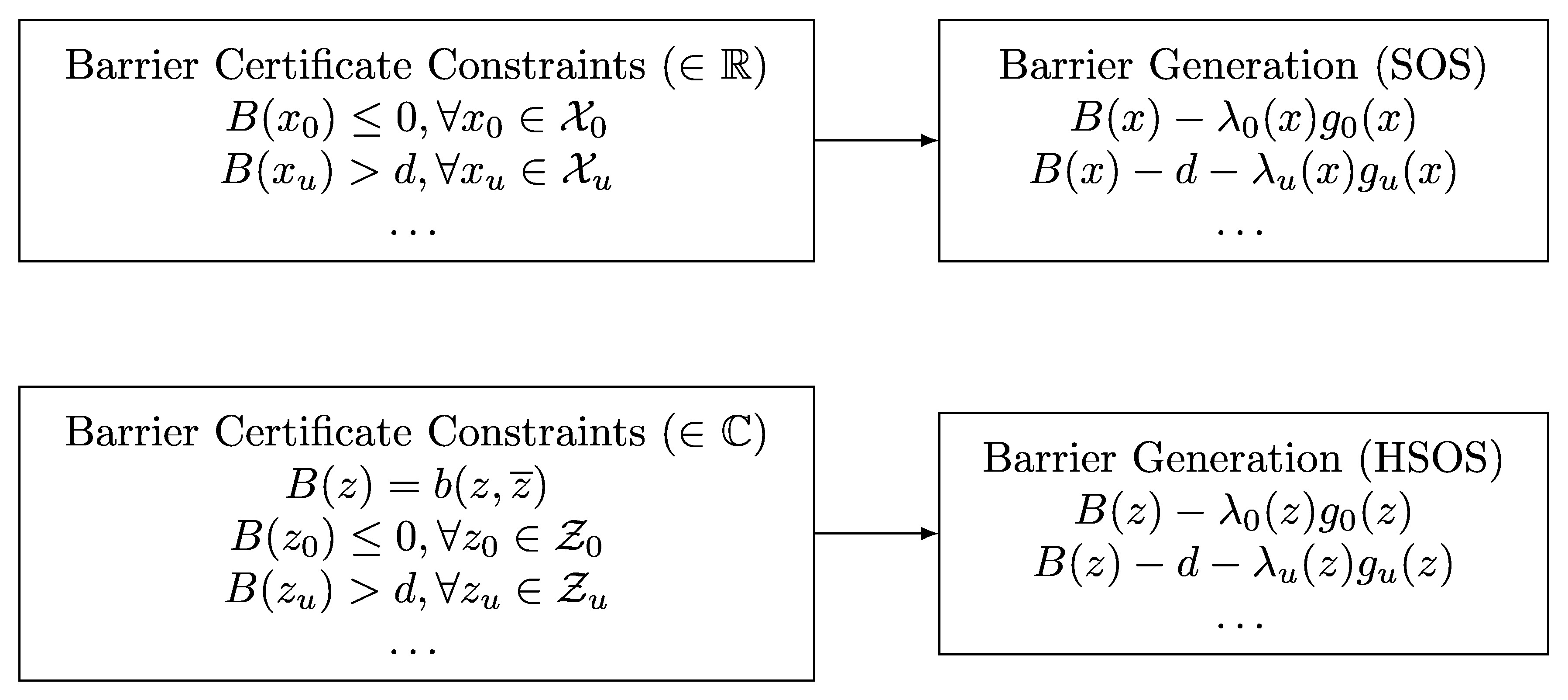}
    \caption{Conversion of a real barrier certificate scheme using SOS equations to generate a barrier into a complex scheme using HSOS equations.
    This is done by replacing the real variables with complex variables, allowing conjugation as an operation, setting the barrier function to be a conjugate-flattening polynomial and requiring the SOS equations to be HSOS equations instead.}
    \label{fig:sos2hsos}
    \Description{Some real barrier certificate constraints ($\in \mathbb{R}$) are generated using related SOS equations.
    Similarly, complex barrier certificate constraints ($\in \mathbb{C}$) are generated using related HSOS equations.}
\end{figure}

Our techniques for extending barrier certificates to the complex domain can be applied to complex dynamical systems in general.
Any barrier certificate that uses SOS techniques to generate a barrier can be extended to the complex domain through the use of conjugate-flattening functions and HSOS equations as demonstrated in Figure~\ref{fig:sos2hsos}.
This extension can be done almost freely by including the complex conjugate as an operation.

The barrier certificate technique is very flexible, but comes at the cost of scalability (as seen in Table~\ref{tab:results}).
While the barrier certificate approach is expensive for quantum circuits with a large number of qubits, the automation provided could be useful for verifying certain behaviors of quantum systems, which are continuous-time systems that are used to define the under-the-hood behavior of quantum gates.
We can model a quantum system when we have noisy qubit initialization and, in the future, we can consider how to verify quantum systems with additional properties, such as noise or control during state evolution, using barrier certificates.
Further, other behavioral properties can be investigated instead of safety, such as reachability where the system evolves into a specified region rather than avoid it.

\subsection*{Acknowledgements}
M. Lewis was supported by the UK EPSRC (project reference EP/T517914/1). P. Zuliani was supported by the project SERICS (PE00000014) under the Italian MUR National Recovery and Resilience Plan funded by the European Union - NextGenerationEU. The work of S. Soudjani is supported by the following grants: EPSRC EP/V043676/1, EIC 101070802, and ERC 101089047.

\bibliographystyle{ACM-Reference-Format}
\bibliography{refs}

\appendix
\section{$Z$ Gate Example Extended}
\label{app:case:z}
To extend the $Z$ gate example to an $n$-qubit system, the dynamics are set as
\begin{equation*}
    f_t(z) = Z^{\otimes n} z.
\end{equation*}
We then specify
\begin{equation*}
\begin{aligned}
    & \init{\sysspace}^p = \{ z \in \sysspace : \abs{(z)_p}^2 \geq 0.9 \}, \text{ and}\\
    & \unsafe{\sysspace}^p = \{ z \in \sysspace : \sum_{j\neq p} \abs{(z)_j}^2 \geq 0.2\},
\end{aligned}
\end{equation*}
for $p \in \{0, \dots, 2^n - 1\}$ (where $\ket{p}$ is the target state) for the initial and unsafe space respectively.

\end{document}